\newtheorem{lem}{Lemma}
\newtheorem{theorem}{Theorem}
\newtheorem{defn}{Definition}
\newtheorem{rem}{Remark}
\def\mc{\mathcal}
\begin{document}
\title{On the Observability and Controllability of Large-Scale IoT Networks: Reducing Number of Unmatched Nodes via Link Addition}

\author{Mohammadreza Doostmohammadian$^\ast$, Hamid R. Rabiee$^\dagger$, \textit{Senior Member, IEEE}
	
\thanks{		
     	$^\ast$ Faculty of Mechanical Engineering, Semnan University, Semnan, Iran {\texttt{doost@semnan.ac.ir}}.
     	
		$^\dagger$ ICT Innovation Center for Advanced Information and Communication Technology, Computer Engineering Department, Sharif University of Technology, Tehran, Iran {\texttt{rabiee@sharif.edu}}.}}

\maketitle

\begin{abstract}
    In this paper, we study large-scale networks in terms of observability and controllability. In particular, we compare the number of unmatched nodes in two main types of Scale-Free (SF) networks: the Barab{\'a}si-Albert (BA) model and the Holme-Kim (HK) model. Comparing the two models based on theory and simulation, we discuss the possible relation between clustering coefficient and the number of unmatched nodes. In this direction, we propose a new algorithm to reduce the number of unmatched nodes via link addition. The results are significant as one can  reduce the number of unmatched nodes and therefore number of embedded sensors/actuators in, for example, an IoT network. This may significantly reduce the cost of controlling devices or monitoring cost in large-scale systems.
	
	\textit{Index Terms} -- observability, controllability, clustering, unmatched nodes, Scale-Free network.
\end{abstract}

\section{Introduction} \label{sec_intro}
\IEEEPARstart{I}{nternet} of Things (IoT) emerges as a new technology involved in many aspects of modern life
and recognizing the networking
features behind developing IoT networks is a significant issue \cite{samtani2018identifying,ma2015networking}. This is because the growth in the number of devices connected via IoT poses major challenges in estimation/control of IoT networks. Such IoT networks arise in different scenarios, e.g. in platoon of connected vehicles \cite{lu2014connected},  Cyber-Physical-Systems (CPS) \cite{liu2011cyber,isj19,sadreazami2017distributed,tsipn2020}, smart cities \cite{patel2017using}, distributed inference over sensor networks \cite{kar_IoT}, smart grid monitoring \cite{camsap11,catterson2011embedded}, etc. Tuning the structure of such large-scale man-made networks for observability and controllability purposes may significantly reduce the sensor and actuator placement costs \cite{spl18}. Therefore, analyzing different network structures in terms of observability and controllability may enable us to possibly make  alterations for cost-reduction purposes. In this direction, this paper focuses on network structures which require less number of driver nodes-- the nodes to be injected by a control input-- and observer nodes --the nodes whose states are measured by a sensor.

\textit{Related literature:}
Controllability/observability under structural perturbation is considered in recent literature \cite{insertion17,zhang2019minimal,chen2018minimal,wang2012optimizing,fardad2014optimal,spl17,alcaraz2013structural,mengiste2015effect,van2011decreasing,tnse18}. Authors in \cite{insertion17,zhang2019minimal} consider the problem of optimal cost link addition (deletion) to alter the network to achieve (lose) structural controllability. They claim that this problem is NP-hard and therefore provide approximate solutions. Similarly, the authors in \cite{chen2018minimal} find the minimal number of link additions to render network controllability. The work \cite{wang2012optimizing} presents a graph perturbation method by adding
a minimum number of links to optimize the controllability of  networks such that it can be fully controlled via a single input signal. The authors in \cite{fardad2014optimal} consider the network controllability degradation subject to malicious attacks in the form of link removals. This work finds  links whose removal causes maximal
rank degradation in the  controllability Gramian. Recovering structural observability in case of node failure is considered in \cite{spl17}. In \cite{alcaraz2013structural} authors discuss the problem of maintaining controllability of networks in case of node/link failure by restoring the so-called \textit{Power Dominating Set (PDS)}.
 Authors in \cite{mengiste2015effect} investigate how structural perturbations  affect the network
controllability. They apply a link pruning strategy over different types of real-world and synthetic networks including Scale-Free (SF) and Small-World (SW) random networks. In \cite{van2011decreasing}, it is shown that decreasing the spectral radius of the network by link removal is NP-complete, and therefore, a heuristic approach based on centrality measure is proposed to solve the problem. What is missing in these literature is a comparative study to check the controllability/observability properties of different network models and a study on the relation between particular structural properties of the network and its controllability/observability.

\textit{Motivation and Contribution:} This work is motivated by tuning the number of driver/observer nodes to control/estimate the state of large-scale synthetic networks. This may help to significantly reduce the control/estimation costs of large-scale IoT networks. In this direction, first, we provide a comparative study of two main types of SF networks, namely,  Barab{\'a}si-Albert (BA) model \cite{barabasi_albert1999}  and Holme-Kim (HK) model \cite{Holme2002csf}. These models are known to have similar power-law degree distributions and average shortest paths, while their only difference  stems from different \textit{clustering} properties. We find the relation between  observability/controllability properties and network clustering. This result motivates significant application in designing man-made large-scale IoT networks as one can tune the network clustering (for example by adopting the results of \cite{kashyap2017mechanisms,serrano2005tuning,dehghani2015using}) to reduce the number of required driver/observer nodes and thus improve network controllability/observability. Further, we provide an algorithm to reduce the number of driver/observer nodes in the network by link addition. Assuming the cost of link addition is negligible, this algorithm can be used to reduce controlling and monitoring costs in IoT network applications.
In some literature, as in \cite{pasqualetti2014controllability,cortesi2014submodularity}, the difficulty of the controllability as the required control energy is also considered. In this paper, we focus on the possibility of controlling or estimating the large-scale networks irrespective of the required energy to do so.

\section{The Framework} \label{se_prob}
\subsection{Preliminary Graph-Theoretic Notions} \label{sec_graph}
The controllability and observability of networks are based on graph-theoretic notions introduced in \cite{Liu_nature,asilomar11,tnse18,isj19minimal} which are originally based on the structural approach in \cite{lin}. A network is represented by a  graph $\mc{G}=(\mc{V},\mc{E})$ with $\mc{V}=\{1,...,n\}$ as the set of nodes and $\mc{E}=\{(i,j)|i \leftrightarrow j\}$ as the set of links. Note that the links are considered bidirectional as the networks in this paper are undirected. Denote the degree of node $i$  by $d_i$, and  the set of neighbors of node $i$ by $\mc{N}_i$. Let $\Gamma = (\mc{V}^+,\mc{V}^-,\mc{E}_\Gamma)$ denote a bipartite representation of graph $\mc{G}$ with set of nodes $\mc{V}^+=\mc{V}^-=\mc{V}$ and the set of links $\mc{E}_\Gamma=\{(i^+,j^-)|(i,j) \in \mc{E}\}$. Note that all the links in $\mc{E}_\Gamma$ are directed from  $\mc{V}^+$ to $\mc{V}^-$. Define a \textit{maximum matching}, $\mc{M}$, as the maximal set of links with no shared end-node and begin-node. In the bipartite graph $\Gamma$, the maximum matching is defined as the maximum number of links from $\mc{V}^+$ to $\mc{V}^-$ sharing no begin-node in $\mc{V}^+$ and no end-node in $\mc{V}^-$. Note that the the maximum matching $\mc{M}$ is not unique in general. Refer to \cite{murota} for more information on these graph-theory notions.

\begin{defn}
	Define the \textit{matched nodes}, $\rho\mc{M}$, as the nodes incident to $\mc{M}$. Define the \textit{unmatched nodes} as $\delta\mc{M} = \mc{V}\backslash\rho\mc{M}$. The set of unmatched nodes is not unique, which is a consequence of the fact that matching is not unique. The set of \textit{contractions} (\textit{dilations}) characterize the set of all equivalent unmatched nodes in terms of observability (controllability).
\end{defn}



A driver node is defined as the node in the network that must be injected by proper control input for network controllability \cite{Liu_nature,isj19minimal}. Similarly, an observer node is defined as the node whose state must be measured for observability and estimation \cite{tnse18}. Since the network is undirected, the dual concepts of controllability and observability are equivalent and the number of observer nodes and driver nodes  are equal. It is known that the number of driver nodes (observer nodes) may be considered as a measure of controllability  (observability) \cite{liu2012controlcentrality}. \textit{It is proved that for network controllability (observability) the unmatched nodes work as driver nodes (observer nodes) \cite{Liu_nature,tnse18,isj19minimal}}. In other words, by injecting control input to the unmatched nodes one may control the network, and by measuring the state of the unmatched nodes the state of the entire network is observable to the estimator. In fact, the number of unmatched nodes determine the minimal number of inputs/outputs for controllability/observability. In this direction, the number of unmatched nodes represent a controllability/observability metric for complex networks \cite{liu2012controlcentrality}.

\subsection{Problem Statement}
In this work, two main types of Scale-Free (SF) networks are analyzed in terms of number of unmatched nodes: (i) Barab{\'a}si-Albert (BA) model, and (ii) Holme-Kim (HK) model. These two types of networks are similar in terms of power-law degree distribution prevalent in many real-world networks. However, the HK model represents a more clustered structure as compared to the BA model. Therefore, the higher clustering coefficient may affect the controllability and observability of the network while other network parameters such as  average node degrees are the same. The result is significant as in man-made IoT network one may reduce the number of driver nodes for controllability or observer nodes for estimation by tuning the clustering of the large-scale IoT network and, therefore, reduce the related costs. In this direction, this paper provides a new algorithm to reduce the number of unmatched nodes by suitable link addition in the network. As a case study, we show that for an example IoT network the proposed algorithm significantly reduces the number of unmatched nodes.

\section{Network Models} \label{sec_net}
\subsection{Barab{\'a}si-Albert Model}
The BA model \cite{barabasi_albert1999} refers to an algorithm for generating random SF networks. The SF network represents a network having a \textit{power-law} degree distribution, i.e. for the fraction $P(d)$ of nodes,
   $P(d) = d^{-\alpha}$,
where $d$ is the node degree  \cite{barabasi2003scale}. 
The main feature of SF networks is their few number of high-degree nodes (or hubs) connected to smaller hubs and iteratively followed by very low-degree nodes.

Barab{\'a}si and Albert proposed a generative random mechanism to model such  power-law degree distributions, which is based on the so-called \textit{preferential-attachment} method as follows. In every iteration a node is added to an initial \textit{seed graph} of few nodes. This new node  makes $L$ new connections to the old nodes in the graph. The link connections are random and the probability of connecting to an old node $i$ is proportional to its node-degree $d_i$. In other words, the new node preferably connects to high-degree nodes, thus, implying the name preferential attachment. Based on this model, the few hubs get more and more connectivity, while many low-degree nodes remain in the network. Algorithm~\ref{alg_BA} briefly states the mechanism of BA network model.
\begin{algorithm} \label{alg_BA}
	\textbf{Given:} seed graph $\mc{G}$ of size $m<n$, final size of the graph $n$, number of added links $L$ in each iteration
	
	\For{$k=m+1:n$ }{
		$\bullet$ Find degree of every node $i$ denoted by $d_i$\;
		$\bullet$ Assign probability $p_i$ to every node $i$ as $p_i = d_i/\sum_{j =1}^{k-1}d_j$\;
		$\bullet$ Add a node $a$ to graph $\mc{G}$\;
		\For{$l=1:L$ }{
			$\bullet$ Randomly choose a node $b$ based on the set of assigned probabilities $p_i$\;
			$\bullet$ Add a link between nodes $a$ and $b$\;}
	}
	\textbf{Return} BA graph $\mc{G}$ of size $n$\;\
	
	\caption{Pseudo-code for Barab{\'a}si-Albert (BA) network generation from a seed graph. }
\end{algorithm}
Assuming $m \ll n$, the average node degree in Algorithm~\ref{alg_BA} is approximately equal to $2L$. Let define a network characteristic known as \textit{clustering coefficient} or \textit{$2$nd-order clustering} as the number of closed triplets  (also referred to as \textit{triangle formation}, \textit{triangular motif}, or \textit{$3$-clique} \cite{benson2016higher}) to the total number of connected open triplets (also known as \textit{$2$-wedges}) in the network.  It is known that for BA networks the clustering coefficient is independent of the node degree $d$ as follows \cite{szabo2003structural}:
\begin{equation}\label{eq_C_BA}
C(d) = \frac{L-1}{8}\frac{(\log(n))^2}{n}.
\end{equation}


\subsection{Holme-Kim Model}
Holme and Kim propose another mechanism for generating SF networks known as HK model \cite{Holme2002csf}, which further developed in \cite{Toivonen2006social}. This mechanism is also based on the preferential attachment of BA model with some modifications. This mechanism also starts with a seed graph. In this model, iteratively a node is added to the graph making $L_1$ random connections to the old nodes based on preferential attachment. However, in the HK model, the new node also makes $L_2$ connections to the neighbors of the preferentially attached node in the last step. These new $L_2$ random connections are also made based on the degree of the neighboring nodes. The new step of preferential attachment to the neighboring nodes generates \textit{clusters} in the network which are missing in the BA model. Therefore, the adopted approach in the HK model is also known as \textit{clustered preferential attachment} method. The procedure for HK network model is summarized in Algorithm~\ref{alg_HK}. 
Assuming $m \ll n$, the average node degree in Algorithm~\ref{alg_HK} is approximately equal to $2L=2(L_1+L_1\times L_2)$. The clustering coefficient of the HK network is as follows \cite{szabo2003structural}:
	\begin{equation} \label{eq_C_HK}
	C(d) \approx  \frac{4L_2}{d}+\frac{L-1}{8}\frac{(\log(n))^2}{n},
	\end{equation}	
where $d$ is the node degree. By comparing this equation with equation \eqref{eq_C_BA} it is clear that the clustering coefficient is increased for $L_2\geq 1$. This implies that by increasing the triad formations via the parameter $L_2$ one can increase the clustering coefficient of the HK network.	

\begin{algorithm} \label{alg_HK}
	\textbf{Given:} seed graph $\mc{G}$ of size $m<n$, final size of the graph $n$, number of added links $L_1$ and $L_2$  in each iteration
	
	\For{$k=m+1:n$ }{
		$\bullet$ Find degree of every node $i$ denoted by $d_i$\;
		$\bullet$ Assign probability $p_i$ to every node $i$ as $p_i = d_i/\sum_{j =1}^{k-1}d_j$\;
		$\bullet$ Add a node $a$ to graph $\mc{G}$\;
		\For{$l=1:L_1$ }{
			$\bullet$ Randomly choose a node $b$ based on the set of assigned probabilities $p_i$\;
			$\bullet$ Add a link between nodes $a$ and $b$\;
			$\bullet$ Find $\mc{N}_b$ as the neighbors of node $b$\;
			$\bullet$ Assign a probability $q_i$ to every neighbor node $i$ as $q_i = d_i/\sum_{j \in \mc{N}_b}d_j$\;
	     	\For{$h=1:L_2$ }{
		    	$\bullet$ Randomly choose a node $c$ from $\mc{N}_b$ based on the set of assigned probabilities $q_i$\;
		    	$\bullet$ Add a link between nodes $a$ and $c$\;}			
			}
	}
	\textbf{Return} HK graph $\mc{G}$ of size $n$\;\
	
	\caption{Pseudo-code for Holme-Kim (HK) network generation from a seed graph. }
\end{algorithm}

\begin{rem} \label{rem_net}
	Based on Algorithm~\ref{alg_BA} and \ref{alg_HK}, the main difference of BA and HK models is in the second \textbf{for} loop in Algorithm~\ref{alg_HK}, which results in higher number of \textit{closed triplets} in the HK model. Note that this holds for the same number of added links $L=L_1+L_1\times L_2$ and, therefore, similar average node degree for BA and HK networks of the same size.
\end{rem}

Following Remark~\ref{rem_net}, the mentioned difference of BA and HK models results in difference of clustering coefficient.
Following equations \eqref{eq_C_BA} and \eqref{eq_C_HK}, the clustering coefficient of the HK network is greater than the BA network of the same size (same number of nodes and links) \cite{Holme2002csf,szabo2003structural}.

\section{Comparing Number of Observer/Driver Nodes} \label{sec_main}

In this section, we compare the number of unmatched nodes in the two main models for SF networks introduced in the previous section. In this direction, the main network parameters namely power-law degree distribution
and logarithmically increasing average shortest-path length along with number of nodes, number of links, and average node-degrees are similar in both BA and HK models. The only difference stems from the clustering coefficient. Recall that the clustering coefficient is greater in the HK model as compared to the BA model.

\subsection{Theoretical Results}
This subsection provides theoretical results and comparison on the matching properties of the HK and BA models.
\begin{theorem} \label{thm_hall_ore} (Hall-Ore Theorem \cite{murota})
	The size of maximum matching $\mc{M}$ in the bipartite graph $\Gamma$ (the bipartite representation of graph $\mc{G}$) is as follows:
	\begin{eqnarray}
	|\mc{M}|=\min\{|\Lambda(\bar{X})|-|\bar{X}|,\bar{X} \subseteq \mc{V}^{+}\}+|\mc{V}^{+}|,
	\end{eqnarray}
	where $\Lambda(\bar{X})=\{v \in \mc{V}^{-}|\exists u \in \bar{X}: (u^+,v^-)\in \mc{E}_\Gamma \}$.
\end{theorem}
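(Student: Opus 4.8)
The plan is to derive this deficiency formula from the classical Hall marriage theorem. Write $\mathrm{def}(\Gamma) := \max\{|\bar X| - |\Lambda(\bar X)| : \bar X \subseteq \mc{V}^{+}\}$; since $\bar X = \emptyset$ is admissible, $\mathrm{def}(\Gamma)\ge 0$, and the asserted identity is equivalent to $|\mc{M}| = |\mc{V}^{+}| - \mathrm{def}(\Gamma)$. I would prove the two inequalities $|\mc{M}| \le |\mc{V}^{+}| - \mathrm{def}(\Gamma)$ and $|\mc{M}| \ge |\mc{V}^{+}| - \mathrm{def}(\Gamma)$ separately.

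The easy direction is the upper bound. Fix any matching $M$ in $\Gamma$ and any $\bar X \subseteq \mc{V}^{+}$. Every vertex of $\bar X$ saturated by $M$ is matched to a distinct vertex of $\Lambda(\bar X)$, so at most $|\Lambda(\bar X)|$ vertices of $\bar X$ are saturated; hence at least $|\bar X| - |\Lambda(\bar X)|$ vertices of $\mc{V}^{+}$ remain unsaturated, and therefore $|M| \le |\mc{V}^{+}| - (|\bar X| - |\Lambda(\bar X)|)$. Taking the supremum over $\bar X$ yields $|M| \le |\mc{V}^{+}| - \mathrm{def}(\Gamma)$, and since $M$ was arbitrary the same holds for $|\mc{M}|$.

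For the reverse inequality I would pad the bipartite graph. Let $d = \mathrm{def}(\Gamma)$ and form $\Gamma'$ by adjoining $d$ fresh vertices to the $\mc{V}^{-}$ side, each made adjacent to every vertex of $\mc{V}^{+}$. For any nonempty $\bar X \subseteq \mc{V}^{+}$, its neighbourhood in $\Gamma'$ has size $|\Lambda(\bar X)| + d \ge |\Lambda(\bar X)| + (|\bar X| - |\Lambda(\bar X)|) = |\bar X|$, so Hall's condition is satisfied in $\Gamma'$; the marriage theorem then gives a matching of $\Gamma'$ that saturates all of $\mc{V}^{+}$, of size $|\mc{V}^{+}|$. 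Deleting the at most $d$ matching edges incident to the new vertices leaves a matching of $\Gamma$ of size at least $|\mc{V}^{+}| - d$. Combining this with the previous paragraph proves Theorem~\ref{thm_hall_ore}.

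The only nontrivial ingredient here is classical Hall's theorem itself, which is where I would expect the real work to be if it is not taken as given: it can be established by induction on $|\mc{V}^{+}|$, splitting on whether some proper nonempty subset attains Hall's bound with equality, or — more in the spirit of the matching computations used later in the paper — via Berge's augmenting-path criterion for maximum matchings. Alternatively, one can sidestep Hall entirely by modelling the matching as an integral maximum flow on the unit-capacity network $s \to \mc{V}^{+} \to \mc{V}^{-} \to t$: every $s$--$t$ cut is encoded by the set $\bar X \subseteq \mc{V}^{+}$ retained on the source side and has capacity $(|\mc{V}^{+}| - |\bar X|) + |\Lambda(\bar X)|$, so max-flow--min-cut gives $|\mc{M}| = \min_{\bar X}\big((|\mc{V}^{+}| - |\bar X|) + |\Lambda(\bar X)|\big)$, which rearranges to exactly the stated expression.
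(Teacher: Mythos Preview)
Your argument is correct and is the standard derivation of the defect (Ore) formula from Hall's marriage theorem; the max-flow--min-cut variant you sketch at the end is an equally valid self-contained route. There is nothing to compare against, however: the paper does not prove Theorem~\ref{thm_hall_ore} at all. It is stated with a citation to \cite{murota} and used as a black box, so any correct proof you supply goes beyond what the paper itself offers.
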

In fact,  $\Lambda(\bar{X})$ denotes the set of nodes in $\mc{V}^{-}$ adjacent to nodes in $\bar{X} \subseteq \mc{V}^{+}$. From Theorem~\ref{thm_hall_ore} we have,
\begin{eqnarray}
|\delta\mc{M}|=|\mc{V}^{+}|-|\mc{M}|=\max\{|\bar{X}|-|\Lambda(\bar{X})|,\bar{X} \subseteq \mc{V}^{+}\}.
\end{eqnarray}
\begin{lem} \label{lem_triplet}
	Let denote the bipartite representation of the closed triplet $1 \leftrightarrow 2 \leftrightarrow 3 \leftrightarrow 1$ by ${\Gamma}_1$ and the open triplet $1 \leftrightarrow 2 \leftrightarrow 3$ by ${\Gamma}_2$. Then, $|\delta\mc{M}_2|>|\delta\mc{M}_1|$, where $\delta\mc{M}_1$ and $\delta\mc{M}_2$ respectively represent the number of unmatched nodes in $\Gamma_1$ and $\Gamma_2$.
\end{lem}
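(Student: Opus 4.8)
The plan is to read both quantities off the Hall--Ore formula in Theorem~\ref{thm_hall_ore}, exploiting the fact that each of the two bipartite graphs has only three left-vertices, so the maximization over $\bar{X}\subseteq\mc{V}^+$ ranges over just eight sets. First I would spell out the two bipartite graphs. Since $\mc{G}$ is undirected, every edge $(i,j)$ contributes both $(i^+,j^-)$ and $(j^+,i^-)$ to $\mc{E}_\Gamma$. For the triangle $\Gamma_1$ this makes each $i^+$ adjacent to exactly the two vertices of $\mc{V}^-$ other than $i^-$; for the path $\Gamma_2$ it makes $1^+$ and $3^+$ each adjacent only to $2^-$, while $2^+$ is adjacent to both $1^-$ and $3^-$.

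Next I would handle $\Gamma_1$. I would exhibit the perfect matching $\{(1^+,2^-),(2^+,3^-),(3^+,1^-)\}$, which has size $3=|\mc{V}^+|$, so that $|\delta\mc{M}_1|=0$; equivalently, one checks $|\Lambda(\bar{X})|\ge|\bar{X}|$ for every nonempty $\bar{X}\subseteq\mc{V}^+$ (a single left-vertex already has two neighbors, and any two left-vertices cover all of $\mc{V}^-$), so the Hall--Ore maximum is attained at $\bar{X}=\emptyset$ and equals $0$. Then I would handle $\Gamma_2$ by producing a deficient set: for $\bar{X}=\{1^+,3^+\}$ one has $\Lambda(\bar{X})=\{2^-\}$, hence $|\bar{X}|-|\Lambda(\bar{X})|=1$ and therefore $|\delta\mc{M}_2|\ge 1$. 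For completeness I would also note that the matching $\{(1^+,2^-),(2^+,3^-)\}$ of size $2$ shows $|\delta\mc{M}_2|=1$ exactly. Combining the two computations yields $|\delta\mc{M}_2|=1>0=|\delta\mc{M}_1|$, which is the claim.

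I do not expect any genuine obstacle: the statement concerns two fixed small configurations and the whole argument is a finite check. The only points requiring care are keeping track of the orientation that the undirected edges of $\mc{G}$ induce on $\Gamma$, and remembering that the Hall--Ore maximization runs over subsets of $\mc{V}^+$ only. With those handled, the inequality is immediate, and this lemma is exactly what will later let us argue that converting open triplets into closed ones by adding links cannot increase the number of unmatched nodes.
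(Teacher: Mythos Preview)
Your proposal is correct and follows essentially the same approach as the paper: both arguments apply the Hall--Ore formula to the two three-vertex bipartite graphs, verifying $|\delta\mc{M}_1|=0$ by checking $|\Lambda(\bar{X})|\ge|\bar{X}|$ for all $\bar{X}$ in the triangle, and exhibiting the deficient set $\bar{X}=\{1,3\}$ with $\Lambda(\bar{X})=\{2\}$ to obtain $|\delta\mc{M}_2|=1$ in the path. Your additional step of writing down explicit matchings is a nice complement but not a departure from the paper's method.
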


\begin{proof}
    In $\Gamma_1$ for $|\bar{X}|=1$ we have $|\Lambda(\bar{X})|=2$, for $|\bar{X}|=2$ we have $|\Lambda(\bar{X})|=3$, and for $|\bar{X}|=3$ we have $|\Lambda(\bar{X})|=3$. Therefore, $|\delta\mc{M}_1|=0$. In $\Gamma_2$ for $\bar{X}=\{1,3\}$ we have $\Lambda(\bar{X})=\{2\}$ and $|\Lambda(\bar{X})|=1$. For other choices of $\bar{X}$ we have $|\bar{X}|-|\Lambda(\bar{X})|<1$. Therefore, $|\delta\mc{M}_2|=1>|\delta\mc{M}_1|$.
\end{proof}
Lemma~\ref{lem_triplet} implies that more open triplet in the network results in higher number of unmatched nodes. Therefore, one expect to see less unmatched nodes in HK networks as compared to BA networks. This is because in the HK model, as compared to the BA model, some open triplets are replaced with closed triplets.

\subsection{Monte-carlo Simulation}
To compare the number of unmatched nodes a Monte-Carlo simulation is performed in this section. We generate the BA model with parameter $L=2$ links per iteration and the HK model with $L_1=L_2=1$ link addition per iteration, and adding one node per iteration for both models. Since $L=L_1+L_1\times L_2$ the number of links and average node degrees are the same in both models. The simulations are performed over $100$ realizations of BA and HK models from $100$ nodes to $1200$ nodes and the results are shown in Fig.~\ref{fig_unmatched}.

 \begin{figure} [hbpt!]
 	\centering
 	\includegraphics[width=1.7in]{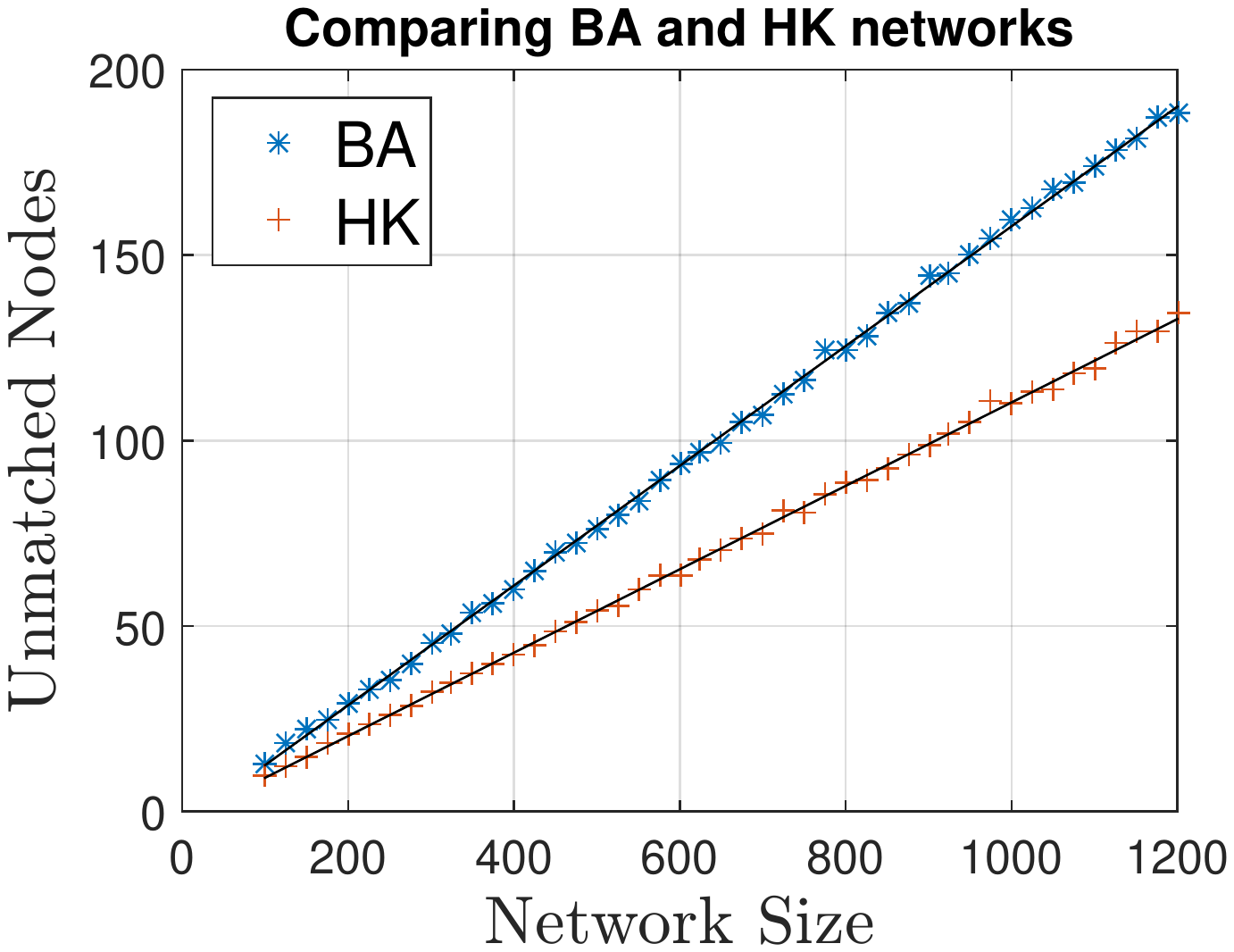}
 	\includegraphics[width=1.7in]{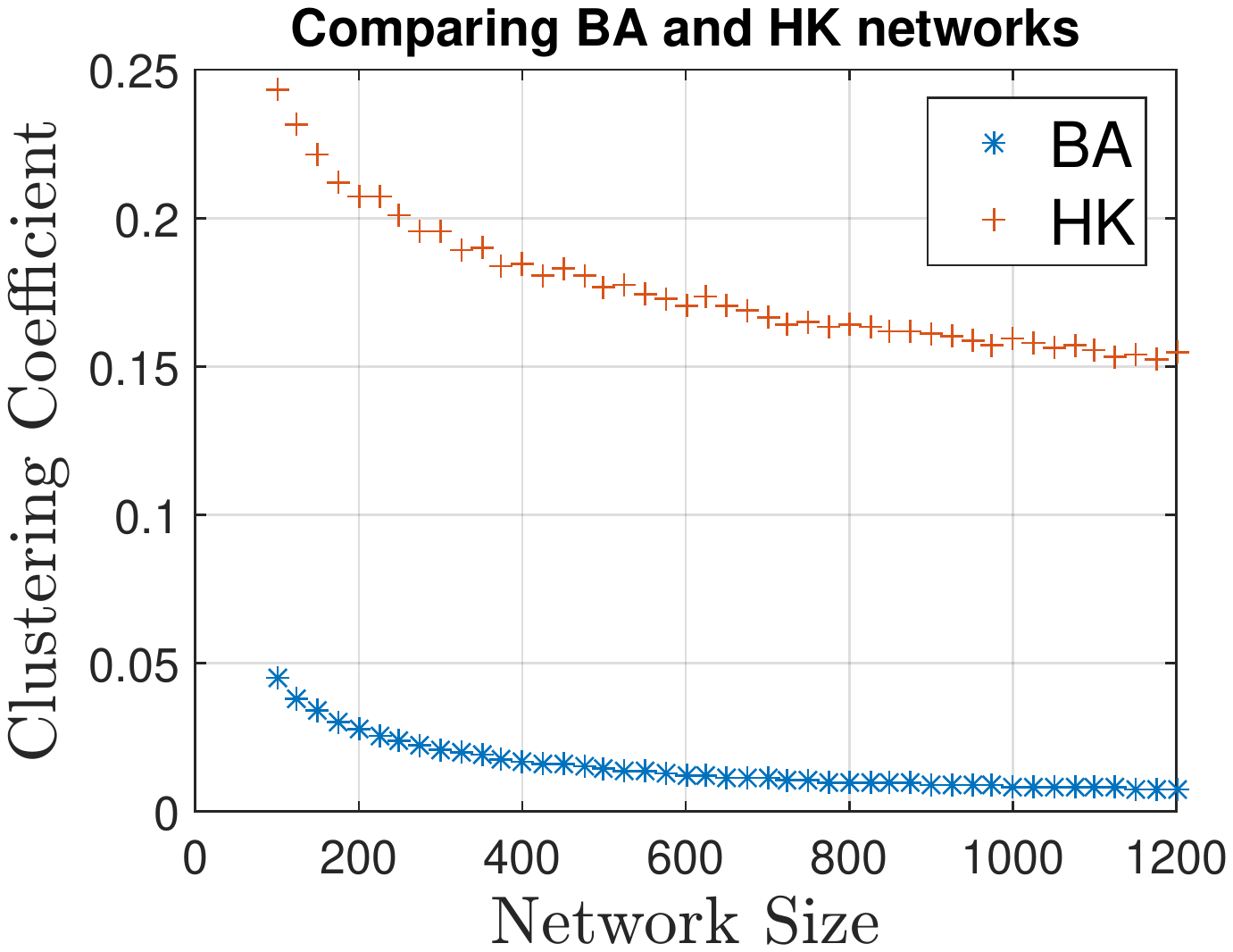} 	
 	\caption{(Left) This figure shows the number of unmatched nodes for different network sizes of BA and HK models. As it can be seen the number of unmatched nodes in the BA model is more than the HK model of the same size. (Right) This figure compares the normalized clustering coefficient of BA and HK network models. It is clear that the clustering coefficient of the HK network is greater than the BA network of the same size.}
 	\label{fig_unmatched}
 \end{figure}

 As expected, from Fig.~\ref{fig_unmatched}-(Right), the network clustering of HK model is higher than the network clustering of the BA model (with the same number of nodes). From Fig.~\ref{fig_unmatched} it can be seen that  by increase in the network size, the clustering is decreased while the number of unmatched nodes is increased for both models. Further, the number of unmatched nodes in the HK networks are less than the BA networks of the same size. Having that the clustering coefficient of the BA network is less than the HK network implies an inverse correlation between the network clustering and number of unmatched nodes. In other words,  increase in the clustering coefficient results in decrease in the number of unmatched nodes and vice versa.
 It is important to note that the average node degree plays a key role in the number of unmatched nodes \cite{Liu_nature}; therefore, for comparison of the two network types, the average node degree for both networks is the same. We again emphasize that the HK and BA models of SF networks are similar in terms of most graph features except the clustering.

\section{Reducing Unmatched Nodes by Link Addition} \label{sec_reduce}
In this section, we provide an algorithm to reduce the number of unmatched nodes in the network via link addition. The algorithm is based on the results of the previous section on triangular motif formation, i.e. increasing the number of closed triplets by link addition. This algorithm targets specific network components related to the connected open triplets or 2-wedges. These components, previously introduced in Section~\ref{sec_graph}, are known as contractions/dilations. In this section, we provide our algorithm based on contractions and the result can be easily extended to the case of dilations. Define a contraction ${\mc{C}_i}$ as the set of  nodes in the network for which $|\mc{N}_{\mc{C}_i}|<|\mc{C}_i|$, where $|\mc{N}_{\mc{C}_i}|$ denotes the cardinality of the set of neighbors of $\mc{C}_i$. The Dulmage-Mendelsohn (DM) decomposition algorithm \cite{murota} can be used to find the contractions in the network by computational complexity $\mc{O}(n^{2.5})$. Based on the definition, it is clear that two successive nodes in a contraction make an open triplet with their common neighboring node. Therefore, adding a link between the successive nodes in a contraction makes a closed triplet (or a triangular motif) and increases the clustering coefficient. This in turn reduces the number of unmatched nodes in the network. In this direction, the main algorithm of this section is given as  Algorithm~\ref{alg_reduce}.

\begin{algorithm} \label{alg_reduce}
	\textbf{Given:} graph $\mc{G}=(\mc{V},\mc{E})$, contractions $\mc{C}_i,~i=\{1,...,n_{\mc{C}}\}$, number of link addition $T<\sum_{i=1}^{n_{\mc{C}}} |\mc{C}_i|/2$
	
	$\bullet$ reorder the contractions from smaller size to larger size\;
	\For{$k=1:T$ }{
		$\bullet$ Find two successive nodes $i$ and $j$ in $\mc{C}_k$\;
		$\bullet$ Make  a bidirectional link between $i$ and $j$ \;		
	}
	\textbf{Return} Graph $\mc{G}_2=(\mc{V},\mc{E}_2)$ with  $|\mc{E}_2| = |\mc{E}| + T$\;\
	
	\caption{Pseudo-code for the proposed link addition algorithm to reduce the number of unmatched nodes in the network. Note that $n_{\mc{C}}$ in the first line represents the number of contractions. Further, by two successive nodes in the contraction we imply two nodes in the contraction that share a neighboring node.}
\end{algorithm}
 Note that in Algorithm~\ref{alg_reduce} the first line reorders the contractions based on their sizes. The reason is that for smaller contractions  less number of links are needed to connect the nodes for triangle formation. This result in more reduction of unmatched nodes via less number of link addition. Further, since some contractions may include more than two nodes, we need to add a link between every two distinct successive nodes in the contraction. Therefore, for a contraction $i$ of size $|\mc{C}_i|$, we need to add $|\mc{C}_i|/2$ links and in general, for all $n_{\mc{C}}$ contractions in the network, the maximum possible link additions is equal to $\sum_{i=1}^{n_{\mc{C}}} |\mc{C}_i|/2$. Also, recall that the complexity of Algorithm~\ref{alg_reduce} mainly stems from the complexity of DM decomposition (of order $\mc{O}(n^{2.5})$). Therefore the polynomial order complexity ensures scalability over large-scale network applications.
 
\begin{theorem} \label{thm_algorithm}
	For a given network $\mc{G}=(\mc{V},\mc{E})$, adding  set of links $\mc{E}_{T}$ (including $T$ links) based on the Algorithm~\ref{alg_reduce} reduces the number of unmatched nodes by at least $T$ nodes in $\mc{G}_2=(\mc{V},\mc{E}_2)$ with $\mc{E}_2 = \mc{E} \cup \mc{E}_{T}$.
\end{theorem}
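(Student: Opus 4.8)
\noindent\emph{Proof plan.} The plan is to work entirely through the Hall--Ore characterization derived just above, namely
$|\delta\mc{M}| = \max\{|\bar{X}|-|\Lambda(\bar{X})| : \bar{X}\subseteq\mc{V}^{+}\}$,
so that ``reducing the number of unmatched nodes'' is literally ``reducing the maximal deficiency $|\bar X|-|\Lambda(\bar X)|$ over Hall violators $\bar X$''. Two framing observations set up the argument. First, augmenting $\mc{G}$ with links only enlarges the bipartite arc set $\mc{E}_\Gamma$, hence can only increase $|\mc{M}|$ and decrease $|\delta\mc{M}|$; therefore the ``at least $T$'' conclusion follows once we show that \emph{each} of the $T$ link additions carried out by Algorithm~\ref{alg_reduce} drops $|\delta\mc{M}|$ by at least one. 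Second, by the Dulmage--Mendelsohn decomposition the contractions $\mc{C}_1,\dots,\mc{C}_{n_{\mc{C}}}$ are node-disjoint, and a maximal Hall violator can be taken as the union of the contractions (with $\mc{N}_{\mc{C}_k}$ excluded), so the maximal deficiency splits as a sum of the per-contraction deficiencies; this is precisely what will let the per-link gains add up without interference.

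First I would analyse one iteration in isolation. When the algorithm selects two successive nodes $i,j\in\mc{C}_k$ with a common neighbor $w$, the triple $i\leftrightarrow w\leftrightarrow j$ is an open triplet, and inserting the bidirectional link $i\leftrightarrow j$ turns it into a closed triplet, exactly the situation of Lemma~\ref{lem_triplet}. In bipartite terms this adds the arcs $i^{+}\!\to j^{-}$ and $j^{+}\!\to i^{-}$ to $\mc{E}_\Gamma$. Fix a maximal Hall violator $\bar X$ containing $\mc{C}_k$; since $\mc{N}_{\mc{C}_k}$ is disjoint from $\bar X$ (a set that intersected $\mc{N}_{\mc{C}_k}$ in $\bar X$ would not be maximally deficient), the nodes $i^{-},j^{-}$ were not in $\Lambda(\bar X)$ before the insertion, while at least one of them joins $\Lambda(\bar X)$ afterward and $\bar X$ itself is unchanged. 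Hence $|\bar X|-|\Lambda(\bar X)|$, and so the new value of $|\delta\mc{M}|$, decreases by at least one (and by two in the generic case where both $i^{-}$ and $j^{-}$ are new).

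Then I would iterate over $k=1,\dots,T$, exploiting node-disjointness of the $\mc{C}_k$: a link inserted inside $\mc{C}_k$ touches no node of any $\mc{C}_{k'}$ with $k'\neq k$, so it neither restores a deficiency already removed from some $\mc{C}_{k'}$ nor creates a new Hall violator straddling several contractions --- which is exactly why the algorithm keeps each insertion inside a single contraction via the bound $T<\sum_i|\mc{C}_i|/2$. Consequently the $T$ insertions strip at least $T$ units from the total deficiency, i.e. $|\delta\mc{M}(\mc{G}_2)|\le|\delta\mc{M}(\mc{G})|-T$, which is the claim.

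The step I expect to be the real obstacle is making this ``additivity'' rigorous: one must check that after inserting links into $\mc{C}_1,\dots,\mc{C}_{k-1}$ the set $\mc{C}_k$ is \emph{still} a contraction of the same deficiency, that the Hall--Ore maximum continues to decompose over the (modified) contractions, and that no new deficient set spanning several contractions can be manufactured by the added intra-contraction edges. A clean way to discharge all of this at once is to carry an explicit maximum matching rather than argue about witnesses: show that each chosen pair $(i,j)$ produces an augmenting path in $\Gamma$ --- the new arc $i^{+}\!\to j^{-}$ closes an $\mc{M}$-alternating path starting at an unmatched begin-node, because $i$ and $j$ are equivalent unmatched candidates of $\mc{C}_k$ --- and that the $T$ augmenting paths obtained over $k=1,\dots,T$ can be applied successively since they occupy pairwise-disjoint node sets. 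Once that augmenting-path picture is in place, ``$|\mc{M}|$ grows by at least $T$'', equivalently ``$|\delta\mc{M}|$ shrinks by at least $T$'', is immediate.
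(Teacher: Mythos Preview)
Your proposal is sound and takes a more structural route than the paper's own proof. The paper argues purely locally: it observes that two successive nodes $u,v$ in a contraction share a neighbor $w$, forming an open triplet $u\leftrightarrow w\leftrightarrow v$; it then simply invokes Lemma~\ref{lem_triplet} to say this open triplet carries one unmatched node and that adding $(u,v)$ closes the triplet and removes that unmatched node, so each of the $T$ insertions drops $|\delta\mc{M}|$ by at least one. There is no Hall--Ore witness, no DM block structure, no augmenting paths --- just Lemma~\ref{lem_triplet} applied $T$ times.

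Your approach, by contrast, works globally with the Hall--Ore deficiency and the DM decomposition, tracks how the maximal violator $\bar X$ changes under each link insertion, and confronts the additivity question explicitly. This is genuinely more rigorous: the paper's invocation of Lemma~\ref{lem_triplet} concerns an \emph{isolated} three-node graph, not a triplet embedded in a large network, so strictly speaking the paper's local step needs exactly the kind of justification you supply (that the new arcs $i^+\!\to j^-$, $j^+\!\to i^-$ enlarge $\Lambda(\bar X)$ for the relevant witness, and that the $T$ gains do not interfere across disjoint contractions). What you gain is a watertight accounting of why the $T$ decrements accumulate; what the paper's proof gains is brevity. Your augmenting-path alternative is also a clean way to finish and is arguably the most transparent formulation of the same idea.
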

\begin{proof}
	Let $\delta \mc{M}_1$ and $\delta \mc{M}_2$ respectively represent the unmatched nodes in $\mc{G}$ and $\mc{G}_2$.
	Based on the definition of a contraction $\mc{C}_i \subset \mc{G}$ for the neighbors of this contraction $\mc{N}_{\mc{C}_i}$ we have $|\mc{N}_{\mc{C}_i}|<|\mc{C}_i|$. Therefore, there exist two successive nodes $u,v \in \mc{C}_i$ sharing one common neighbor $w \in \mc{N}_{\mc{C}_i}$ such that $u \leftrightarrow w \leftrightarrow v$ makes an open triplet. Following Lemma~\ref{lem_triplet}, this open triplet includes $1$ unmatched node. Adding an undirected link $(u,v)\in \mc{E}_{T}$ according to Algorithm~\ref{alg_reduce} makes a closed triplet $u \leftrightarrow w \leftrightarrow v \leftrightarrow u$ which, according to Lemma~\ref{lem_triplet}, includes no unmatched node. Therefore, adding every such link according to Algorithm~\ref{alg_reduce} reduces $|\delta \mc{M}_1|$ by at least $1$ and consequently by adding $T$ such links we have $|\delta \mc{M}_2| \leq |\delta \mc{M}_1| - T$ which proves the theorem.
\end{proof}
Recall that following the proof of Theorem~\ref{thm_algorithm}, increasing the number of closed triplets via Algorithm~\ref{alg_reduce} directly results in the increase of parameter $L_2$ in equation~\eqref{eq_C_HK} and, therefore, increase in the network clustering coefficient.

\section{Simulations} \label{sec_example}
\subsection{Illustrative Example}
Consider two networks of  $25$ nodes, one generated by the BA model with $L=2$ and the other generated by the HK model with $L_1=L_2=1$. The two networks are shown in Fig.~\ref{fig_sim_BA}. For both networks the seed graph is considered as a line graph of $5$ nodes.
\begin{figure}[b]
	\centering
	\includegraphics[width=1.4in]{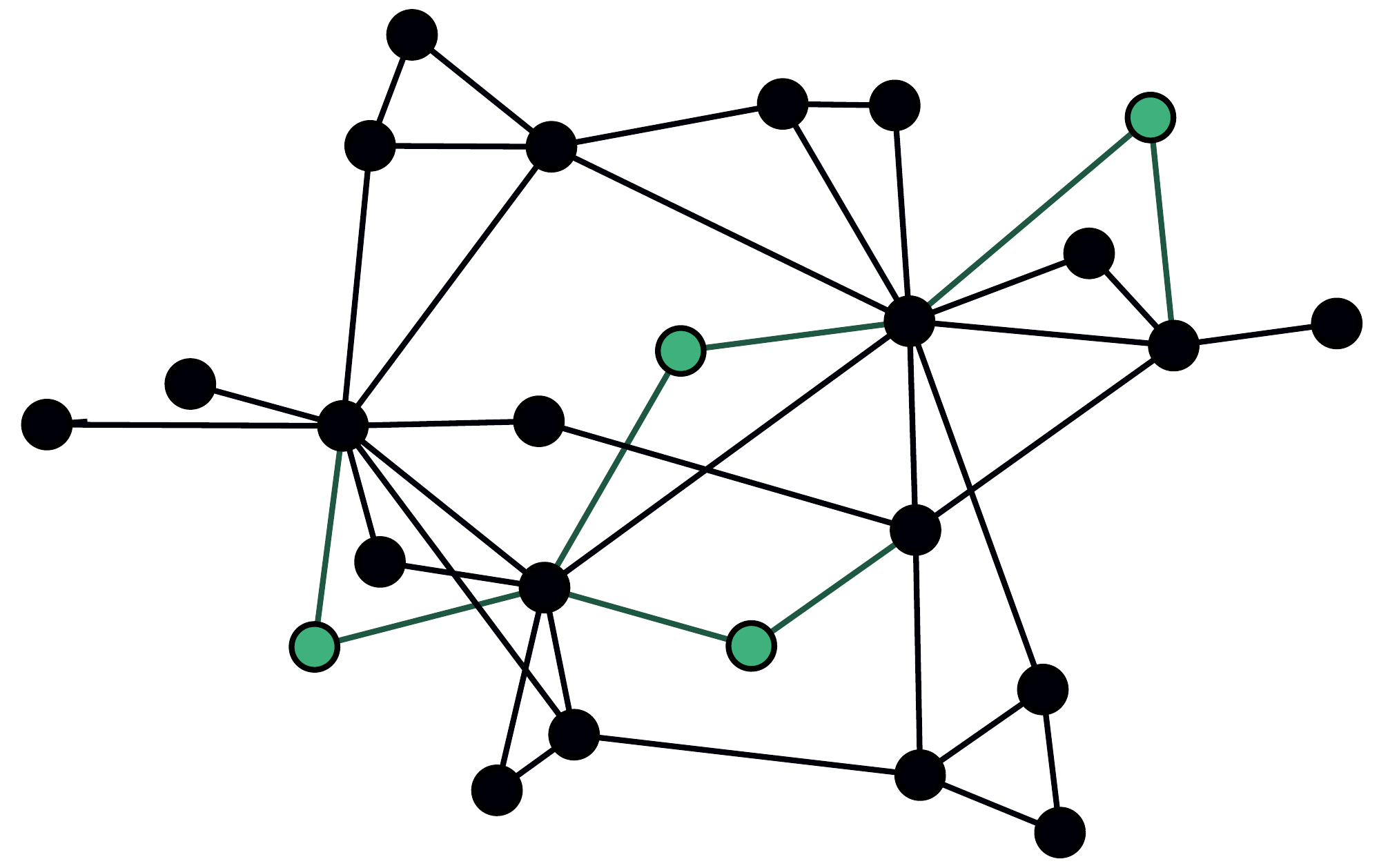}
	\includegraphics[width=1.4in]{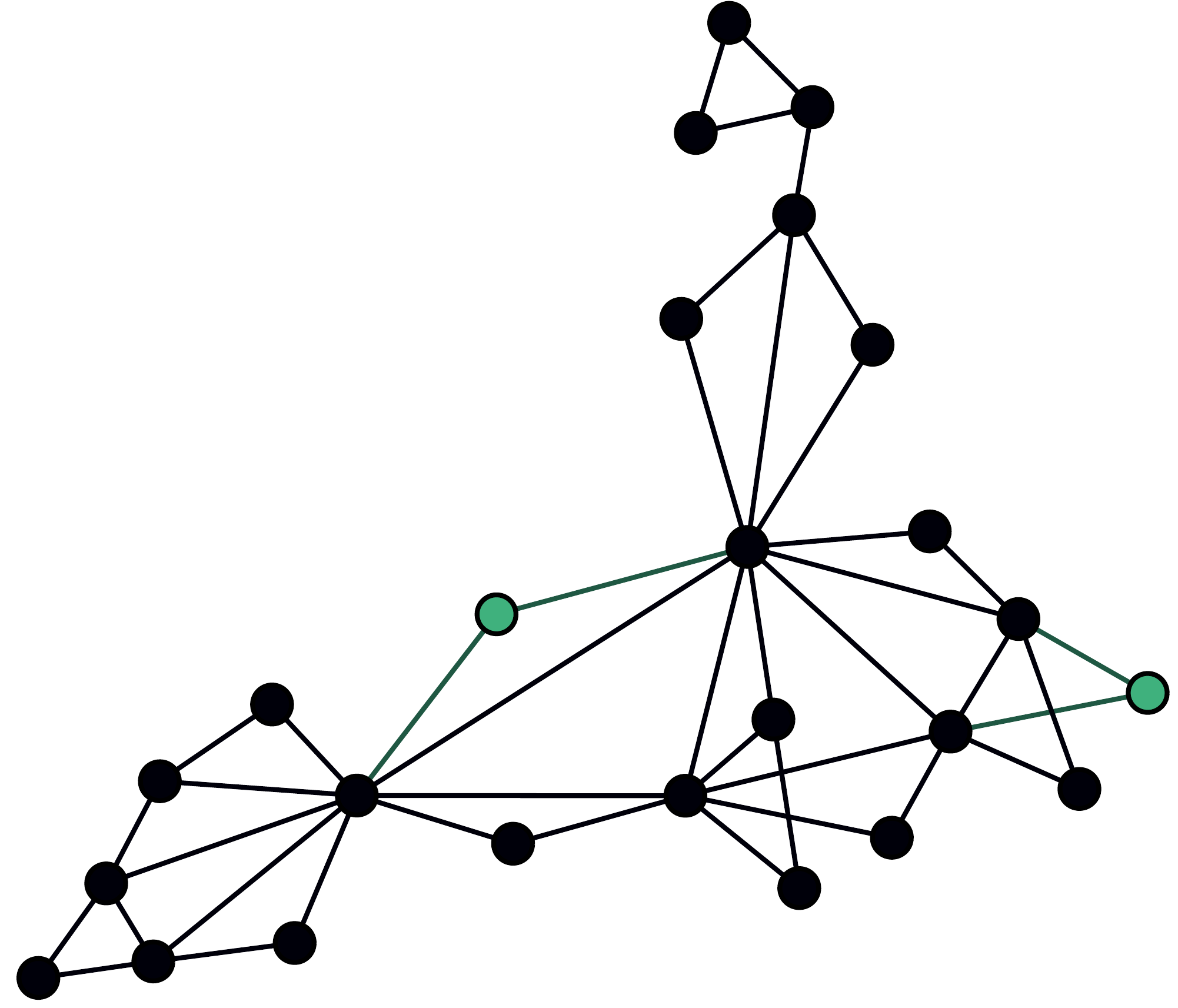}
	\caption{An example (Left) BA network and (Right) HK network of $25$ nodes is shown in this figure. The $4$ unmatched nodes in BA network and $2$ unmatched in HK model are shown as green nodes.}
	\label{fig_sim_BA}
\end{figure}
%
The properties of the two networks are compared in Table~\ref{tab_sim}.  As expected, the HK network with higher clustering coefficient includes less number of unmatched nodes compared to the BA network.
\begin{table} [hbpt!]
	\centering
	\caption{Properties of the BA and HK networks given in Fig.~\ref{fig_sim_BA}.}
	\begin{tabular}{|c|c|c|c|}
		\hline
		Network& Avg. deg.& Unmatcheds & Clust. coef.\\
		\hline
		BA & ~$3.4$  &~$4$ &~$0.228$ \\
		\hline
		HK & ~$3.4$  &~$2$ &~$0.330$ \\
		\hline
		\hline
	\end{tabular}
	\label{tab_sim}
\end{table}

Monte-Carlo simulation of Kalman filtering is performed over the two networks. Consider $x$ to represent the state of each node. The link weights, representing the non-zero entries of system matrix $A$, are chosen randomly. Note that the entries are such that $\rho(A)=1.2$ ($\rho(A)$ represents the spectral radius of system matrix $A$) implying unstable system dynamics $\dot{x}=Ax+\nu$. We measure the states of the colored nodes in Fig.~\ref{fig_sim_BA} as $y=Cx+\eta$ with the non-zero entries of the measurement matrix $C$ chosen randomly. The system noise $\nu$ and measurement noise $\eta$ are considered as Gaussian noise $\mc{N}(0,0.02)$.  The mean-squared-estimation-error (MSEE) for states of both networks are shown in Fig.~\ref{fig_kalman_BA}. As expected,  measurement of unmatched nodes as observer nodes results in structural observability and, therefore, the MSEE in both networks are bounded steady-state stable.

\begin{figure}
	\centering
	\includegraphics[width=1.5in]{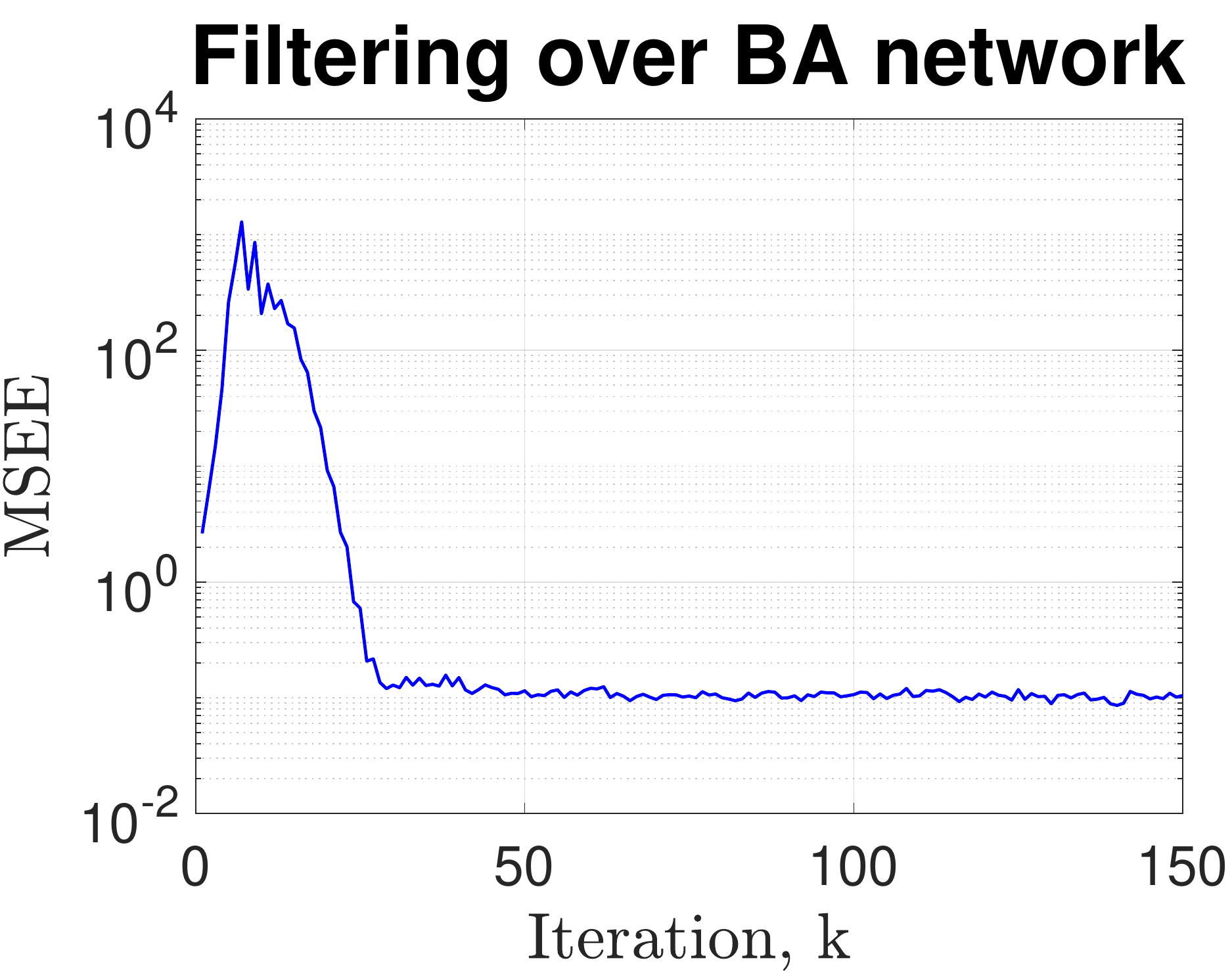}
	\includegraphics[width=1.5in]{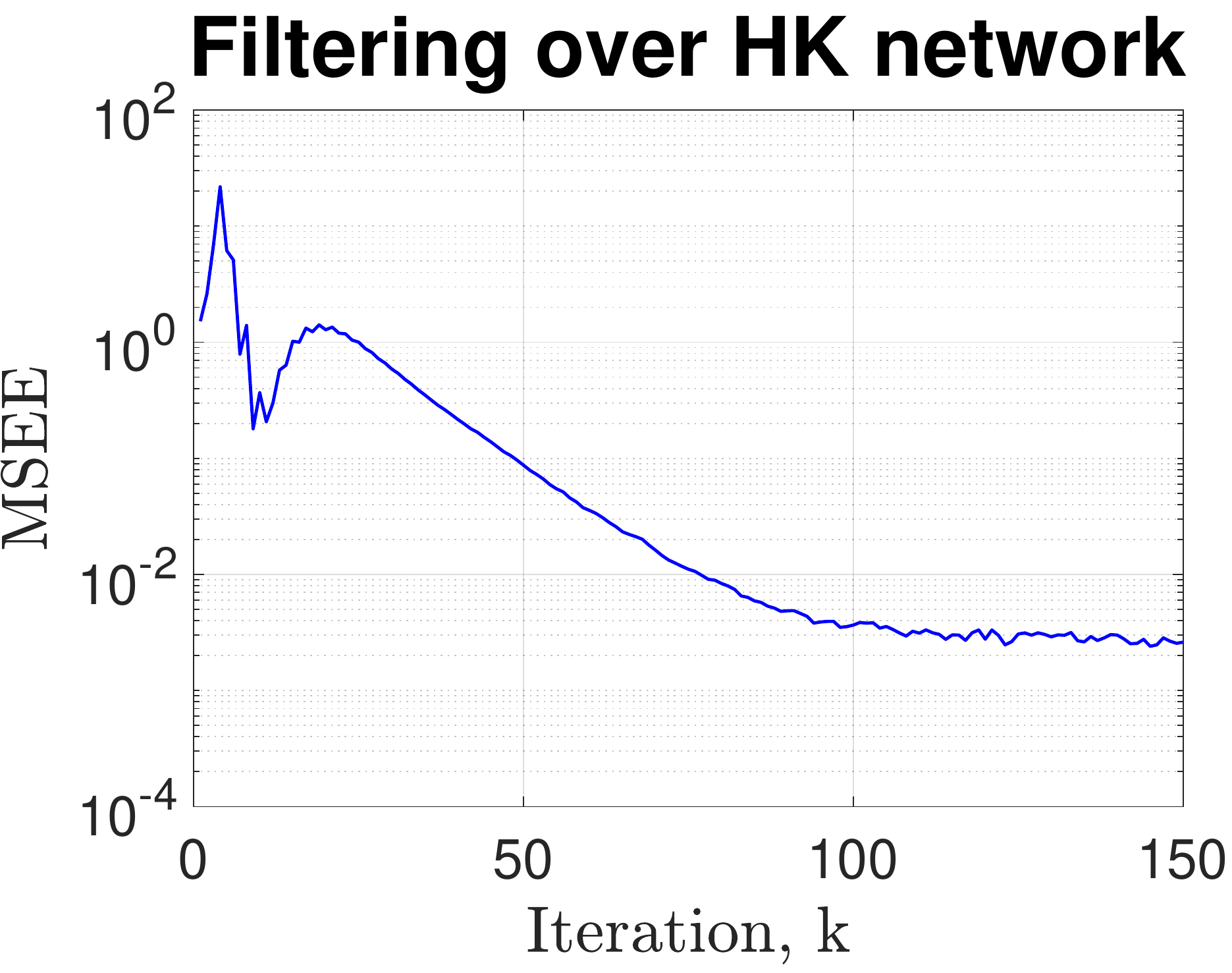}
	\caption{The MSEE of the Kalman filter over the network given (Left) in Fig.~\ref{fig_sim_BA}-(Left) with the $4$ unmatched nodes measured, and (Right) in Fig.~\ref{fig_sim_BA}-(Right) with the $2$ unmatched nodes measured.}
	\label{fig_kalman_BA}
\end{figure}

\subsection{A Case Study}
In this subsection, we apply Algorithm~\ref{alg_reduce} on a real-world network, known as \textit{Route-views}, to reduce its number of unmatched nodes. The $6474$ nodes of this undirected network represent autonomous systems of the Internet connected with each other via $13895$ communication links. This network is shown in Fig.~\ref{fig_rout}, where the red nodes represent the unmatched nodes.
\begin{figure}
	\centering
	\includegraphics[width=3.0in]{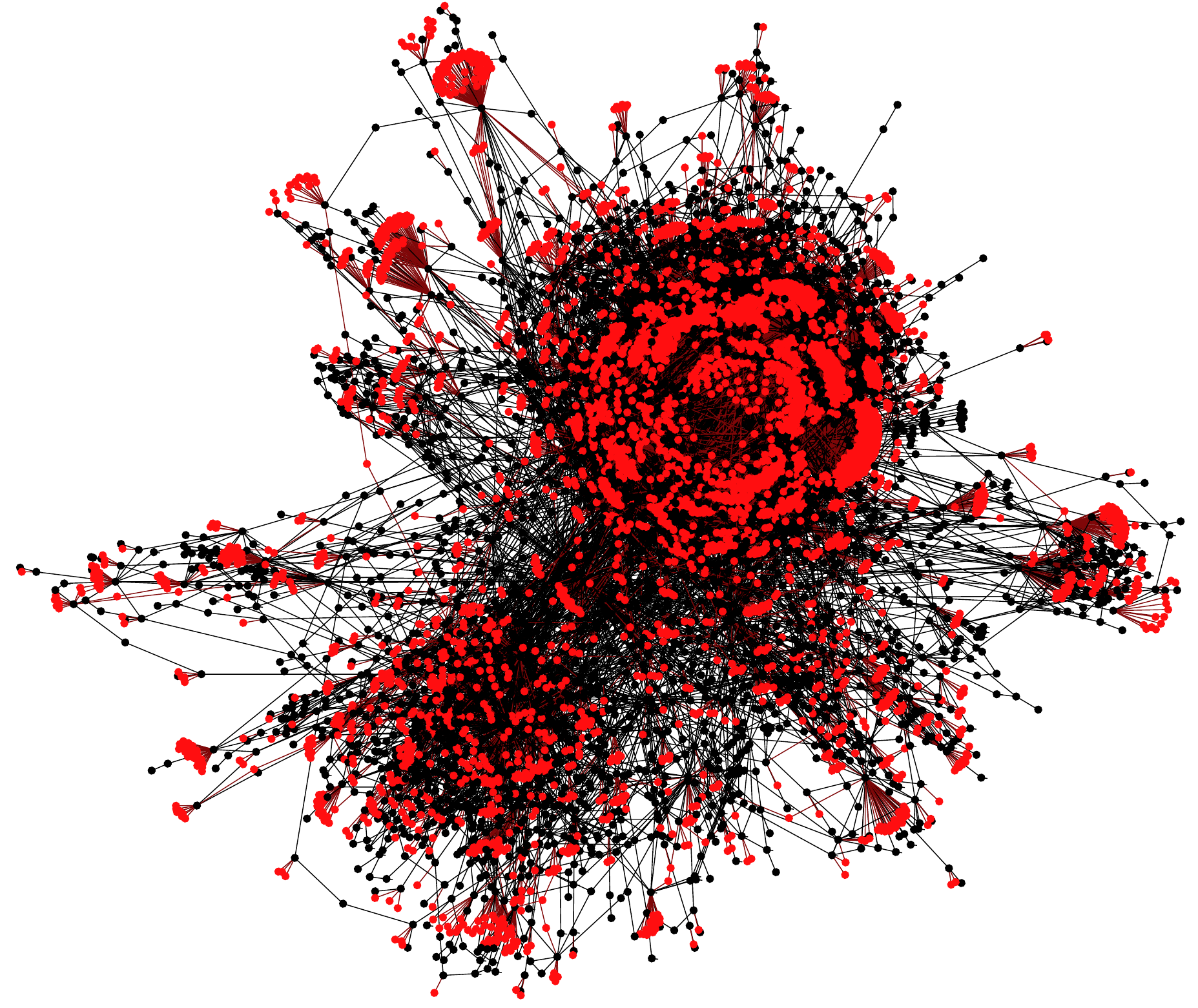}
	\caption{This undirected network represents the communications of autonomous systems, also  known as Route-views network. The red nodes represent one set of possible  unmatched nodes in the network.}
	\label{fig_rout}
\end{figure}
More information on this network is available in \cite{leskovec2007graph}. We add new links to this network using Algorithm~\ref{alg_reduce}, and check the number of unmatched nodes in the network.  The results are shown in Table~\ref{tab_reduce}.
 \begin{table} [hbpt!]
 	\centering
 	\caption{Reduced number of unmatched nodes via link addition by making triangular motifs.}
 	\begin{tabular}{|c|c|c|}
 		\hline
 		Added links& Unmatcheds & Clustering coefficient $\times 10^3$\\
 		\hline
 		$0$ & ~$3568$  &~$9.591$  \\
 		\hline
 		$30$ & ~$3512$  &~$9.635$  \\
 		\hline
 		$60$ & ~$3461$  &~$9.679$ \\
 		\hline
 		$90$ & ~$3413$  &~$9.724$ \\
 		\hline
 		$120$ & ~$3369$  &~$9.768$ \\
 		\hline
 		\hline
 	\end{tabular}
 	\label{tab_reduce}
 \end{table}
As expected, the increase in the number of closed triplets result in increase of the clustering coefficient and in turn decrease in the number of unmatched nodes. This is significant as one can reduce the number of observer nodes for estimation (or the number of driver nodes for control) by suitably adding links to similar IoT networks. Note that in this work we assume the cost of link addition is less than the cost of sensor/actuator placement. This is a key assumption in the relevant literature \cite{insertion17,zhang2019minimal,chen2018minimal,wang2012optimizing}.  Therefore, recalling that adding every link reduces at least one unmatched node (to be controlled/observed by an actuator/sensor), the overall cost is reduced by every link addition.

\section{Conclusions} \label{sec_con}
Results of this paper are significant as in large-scale networks, for example in smart grid monitoring \cite{camsap11}, the number of embedded sensors and related costs can be reduced. For example, following the results in \cite{camsap11}, the design of smart grid network is a key measure affecting the system observability. In this direction, Algorithm~\ref{alg_reduce} and similar results for increasing the number of triad formations can be used in the design of such  smart grid networks  to reduce the number of sensor placement  embedded on the grid for  observability recovery.

As a direction of future research other types of random networks, including Small World (SW) and Erdos-R{\'e}nyi (ER), can be considered for similar analysis. In this regard, the algorithm for constructing these models needs to be modified to tune the clustering and number of closed triplets while keeping the average node degree unchanged.


\bibliographystyle{IEEEbib}
\bibliography{bibliography}

\begin{thebibliography}{10}

\bibitem{samtani2018identifying}
S.~Samtani, S.~Yu, H.~Zhu, M.~Patton, J.~Matherly, and H.~Chen,
\newblock ``Identifying scada systems and their vulnerabilities on the internet
  of things: A text-mining approach,''
\newblock {\em IEEE Intelligent Systems}, vol. 33, no. 2, pp. 63--73, 2018.

\bibitem{ma2015networking}
H.~Ma, L.~Liu, A.~Zhou, and D.~Zhao,
\newblock ``On networking of internet of things: Explorations and challenges,''
\newblock {\em IEEE Internet of Things Journal}, vol. 3, no. 4, pp. 441--452,
  2015.

\bibitem{lu2014connected}
N.~Lu, N.~Cheng, N.~Zhang, X.~Shen, and J.~W. Mark,
\newblock ``Connected vehicles: Solutions and challenges,''
\newblock {\em IEEE Internet of Things Journal}, vol. 1, no. 4, pp. 289--299,
  2014.

\bibitem{liu2011cyber}
Z.~Liu, D.~Yang, D.~Wen, W.~Zhang, and W.~Mao,
\newblock ``Cyber-physical-social systems for command and control,''
\newblock {\em IEEE Intelligent Systems}, vol. 26, no. 4, pp. 92--96, 2011.

\bibitem{isj19}
M.~Doostmohammadian, H.~R. Rabiee, and U.~A. Khan,
\newblock ``Cyber-social systems: modeling, inference, and optimal design,''
\newblock {\em IEEE Systems Journal}, vol. 14, no. 1, pp. 73--83, 2020.

\bibitem{sadreazami2017distributed}
H.~Sadreazami, A.~Mohammadi, A.~Asif, and K.~N. Plataniotis,
\newblock ``Distributed-graph-based statistical approach for intrusion
  detection in cyber-physical systems,''
\newblock {\em IEEE Transactions on Signal and Information Processing over
  Networks}, vol. 4, no. 1, pp. 137--147, 2017.

\bibitem{tsipn2020}
M.~Doostmohammadian and U.~A. Khan,
\newblock ``Minimal sufficient conditions for structural
  observability/controllability of composite networks via kronecker product,''
\newblock {\em IEEE Transactions on Signal and Information processing over
  Networks}, vol. 6, pp. 78--87, 2020.

\bibitem{patel2017using}
P.~Patel, M.I. Ali, and A.~Sheth,
\newblock ``On using the intelligent edge for iot analytics,''
\newblock {\em IEEE Intelligent Systems}, vol. 32, no. 5, pp. 64--69, 2017.

\bibitem{kar_IoT}
Y.~Chen, S.~Kar, and J.~M.~F. Moura,
\newblock ``The internet of things: Secure distributed inference,''
\newblock {\em IEEE Signal Processing Magazine}, vol. 35, no. 5, pp. 64--75,
  2018.

\bibitem{camsap11}
U.~A. Khan and M.~Doostmohammadian,
\newblock ``A sensor placement and network design paradigm for future smart
  grids,''
\newblock in {\em 4th International Workshop on Computational Advances in
  Multi-Sensor Adaptive Processing}, San Juan, Puerto Rico, Dec. 2011, pp.
  137--140.

\bibitem{catterson2011embedded}
V.~Catterson, E.~Davidson, and S.~McArthur,
\newblock ``Embedded intelligence for electrical network operation and
  control,''
\newblock {\em IEEE Intelligent systems}, vol. 26, no. 2, pp. 38--45, 2011.

\bibitem{spl18}
M.~Doostmohammadian, H.~R. Rabiee, and U.~A. Khan,
\newblock ``Structural cost-optimal design of sensor networks for distributed
  estimation,''
\newblock {\em IEEE Signal Processing Letters}, vol. 25, no. 6, pp. 793--797,
  2018.

\bibitem{insertion17}
Y.~Zhang and T.~Zhou,
\newblock ``On the edge insertion/deletion and controllability distance of
  linear structural systems,''
\newblock in {\em IEEE 56th Annual Conference on Decision and Control (CDC)},
  Dec. 2017, pp. 2300--2305.

\bibitem{zhang2019minimal}
Y.~Zhang and T.~Zhou,
\newblock ``Minimal structural perturbations for controllability of a networked
  system: Complexities and approximations,''
\newblock {\em International Journal of Robust and Nonlinear Control}, vol. 29,
  no. 12, pp. 4191--4208, 2019.

\bibitem{chen2018minimal}
X.~Chen, S.~Pequito, G.~J. Pappas, and V.~M. Preciado,
\newblock ``Minimal edge addition for network controllability,''
\newblock {\em IEEE Transactions on Control of Network Systems}, vol. 6, no. 1,
  pp. 312--323, 2018.

\bibitem{wang2012optimizing}
W.~Wang, X.~Ni, Y.~Lai, and C.~Grebogi,
\newblock ``Optimizing controllability of complex networks by minimum
  structural perturbations,''
\newblock {\em Physical Review E}, vol. 85, no. 2, pp. 026115, 2012.

\bibitem{fardad2014optimal}
M.~Fardad, A.~Diwadkar, and U.~Vaidya,
\newblock ``On optimal link removals for controllability degradation in
  dynamical networks,''
\newblock in {\em IEEE 53rd Annual Conference on Decision and Control (CDC)}.
  IEEE, 2014, pp. 499--504.

\bibitem{spl17}
M.~Doostmohammadian, H.~R. Rabiee, H.~Zarrabi, and U.~A. Khan,
\newblock ``Distributed estimation recovery under sensor failure,''
\newblock {\em IEEE Signal Processing Letters}, vol. 24, no. 10, pp.
  1532--1536, 2017.

\bibitem{alcaraz2013structural}
C.~Alcaraz, E.~E. Miciolino, and S.~Wolthusen,
\newblock ``Structural controllability of networks for non-interactive
  adversarial vertex removal,''
\newblock in {\em Critical Information Infrastructures Security}, pp. 120--132.
  2013.

\bibitem{mengiste2015effect}
S.~A. Mengiste, A.~Aertsen, and A.~Kumar,
\newblock ``Effect of edge pruning on structural controllability and
  observability of complex networks,''
\newblock {\em Scientific reports}, vol. 5, pp. 18145, 2015.

\bibitem{van2011decreasing}
P.~Van~Mieghem, D.~Stevanovi{\'c}, F.~Kuipers, C.~Li, R.~Van De~Bovenkamp,
  D.~Liu, and H.~Wang,
\newblock ``Decreasing the spectral radius of a graph by link removals,''
\newblock {\em Physical Review E}, vol. 84, no. 1, pp. 016101, 2011.

\bibitem{tnse18}
M.~Doostmohammadian, H.~R. Rabiee, H.~Zarrabi, and U.~Khan,
\newblock ``Observational equivalence in system estimation: Contractions in
  complex networks,''
\newblock {\em IEEE Transactions on Network Science and Engineering}, vol. 5,
  no. 3, pp. 212--224, 2018.

\bibitem{barabasi_albert1999}
A.~L. Barab{\'a}si and R.~Albert,
\newblock ``Emergence of scaling in random networks,''
\newblock {\em Science}, vol. 286, no. 5439, pp. 509--512, 1999.

\bibitem{Holme2002csf}
P.~Holme and B.~J. Kim,
\newblock ``Growing scale-free networks with tunable clustering,''
\newblock {\em Physical review E}, vol. 65, no. 2, pp. 026107, 2002.

\bibitem{kashyap2017mechanisms}
G.~Kashyap and G.~Ambika,
\newblock ``Mechanisms for tuning clustering and degree-correlations in
  directed networks,''
\newblock {\em Journal of Complex Networks}, vol. 6, no. 5, pp. 767--787, 2018.

\bibitem{serrano2005tuning}
M.~A. Serrano and M.~Bogun{\'a},
\newblock ``Tuning clustering in random networks with arbitrary degree
  distributions,''
\newblock {\em Physical Review E}, vol. 72, no. 3, pp. 036133, 2005.

\bibitem{dehghani2015using}
S.~Dehghani, M.~A. Fazli, J.~Habibi, and S.~Yazdanbod,
\newblock ``Using shortcut edges to maximize the number of triangles in
  graphs,''
\newblock {\em Operations Research Letters}, vol. 43, no. 6, pp. 586--591,
  2015.

\bibitem{pasqualetti2014controllability}
F.~Pasqualetti, S.~Zampieri, and F.~Bullo,
\newblock ``Controllability metrics, limitations and algorithms for complex
  networks,''
\newblock {\em IEEE Transactions on Control of Network Systems}, vol. 1, no. 1,
  pp. 40--52, 2014.

\bibitem{cortesi2014submodularity}
F.~L. Cortesi, T.~H. Summers, and J.~Lygeros,
\newblock ``Submodularity of energy related controllability metrics,''
\newblock in {\em 53rd IEEE Conference on Decision and Control}. IEEE, 2014,
  pp. 2883--2888.

\bibitem{Liu_nature}
Y.~Y. Liu, J.~J. Slotine, and A.~L. Barab\'{a}si,
\newblock ``Controllability of complex networks,''
\newblock {\em Nature}, vol. 473, no. 7346, pp. 167--173, May 2011.

\bibitem{asilomar11}
M.~Doostmohammadian and U.~A. Khan,
\newblock ``Communication strategies to ensure generic networked observability
  in multi-agent systems,''
\newblock in {\em 45th Annual Asilomar Conference on Signals, Systems, and
  Computers}, Pacific Grove, CA, Nov. 2011, pp. 1865--1868.

\bibitem{isj19minimal}
M.~Doostmohammadian,
\newblock ``Minimal driver nodes for structural controllability of large-scale
  dynamical systems: Node classification,''
\newblock {\em IEEE Systems Journal}, 2019.

\bibitem{lin}
C.~Lin,
\newblock ``Structural controllability,''
\newblock {\em IEEE Transactions on Automatic Control}, vol. 19, no. 3, pp.
  201--208, Jun. 1974.

\bibitem{murota}
K.~Murota,
\newblock {\em Matrices and matroids for systems analysis},
\newblock Springer, 2000.

\bibitem{liu2012controlcentrality}
Y.~Y. Liu, J.~J. Slotine, and A.~L. Barab{\'a}si,
\newblock ``Control centrality and hierarchical structure in complex
  networks,''
\newblock {\em Plos one}, vol. 7, no. 9, pp. e44459, 2012.

\bibitem{barabasi2003scale}
A.~Barab{\'a}si and E.~Bonabeau,
\newblock ``Scale-free networks,''
\newblock {\em Scientific american}, vol. 288, no. 5, pp. 60--69, 2003.

\bibitem{benson2016higher}
A.~R. Benson, D.~F. Gleich, and J.~Leskovec,
\newblock ``Higher-order organization of complex networks,''
\newblock {\em Science}, vol. 353, no. 6295, pp. 163--166, 2016.

\bibitem{szabo2003structural}
G.~Szab{\'o}, M.~Alava, and J.~Kert{\'e}sz,
\newblock ``Structural transitions in scale-free networks,''
\newblock {\em Physical Review E}, vol. 67, no. 5, pp. 056102, 2003.

\bibitem{Toivonen2006social}
R.~Toivonen, J.~Onnela, J.~Saram{\"a}ki, J.~Hyv{\"o}nen, and K.~Kaski,
\newblock ``A model for social networks,''
\newblock {\em Physica A: Statistical Mechanics and its Applications}, vol.
  371, no. 2, pp. 851--860, 2006.

\bibitem{leskovec2007graph}
J.~Leskovec, J.~Kleinberg, and C.~Faloutsos,
\newblock ``Graph evolution: Densification and shrinking diameters,''
\newblock {\em ACM Transactions on Knowledge Discovery from Data (TKDD)}, vol.
  1, no. 1, pp. 2, 2007.

\end{thebibliography}
\end{document}